\newfont{\bb}{msbm10 at 12pt}
\newcommand{\bd}{\begin{definition}}                %inizia definizione
\newcommand{\ed}{\end{definition}}                  %fine definizione
\newcommand{\bc}{\begin{corollary}}                 %inizia corollario
\newcommand{\ec}{\end{corollary}}                   %fine corollario
\newcommand{\bl}{\begin{lemma}}                     %inizia lemma
\newcommand{\el}{\end{lemma}}                       %fine lemma
\newcommand{\bp}{\begin{proposition}}            %inizia proposizione
\newcommand{\ep}{\end{proposition}}                %fine proposizione
\newcommand{\bere}{\begin{remark}}                  %inizia osservazione
\newcommand{\ere}{\end{remark}}                     %fine oservazione
\newcommand{\bt}{\begin{theorem}}
\newcommand{\et}{\end{theorem}}
\newcommand{\be}{\begin{equation}}
\newcommand{\ee}{\end{equation}}
\newcommand{\bit}{\begin{itemize}}
\newcommand{\eit}{\end{itemize}}
\newtheorem{theorem}{Theorem}[section]
\newtheorem{corollary}[theorem]{Corollary}
\newtheorem{lemma}[theorem]{Lemma}
\newtheorem{proposition}[theorem]{Proposition}
\theoremstyle{definition}
\newtheorem{definition}[theorem]{Definition}
\theoremstyle{remark}
\newtheorem{remark}[theorem]{Remark}
\begin{document}
%
%\DeclareGraphicsExtensions{.pdf}

\title{In a distinguishing spacetime the horismos relation generates the causal relation}

\author{E. Minguzzi \footnote{Dipartimento di Matematica Applicata, Universit\`a degli Studi di Firenze,  Via
S. Marta 3,  I-50139 Firenze, Italy. E-mail:
ettore.minguzzi@unifi.it}}

%\date{April 2008}
\date{}
 \maketitle

\begin{abstract}
\noindent It is proved that in a distinguishing spacetime the
horismos relation {$E^{+}\!\!=\!J^{+}\backslash I^{+}$} generates
the causal relation $J^{+}$. In other words two causally related
events are joined by a chain of horismotically related events, or
again, the causal relation is the smallest transitive relation
containing the horismos relation. The result is sharp in the sense
that distinction can not be weakened to future or past distinction.
Finally, it is proved that a  spacetime in which the horismos
relation generates the causal relation is necessarily non-total
imprisoning.
\end{abstract}

%\pacs{}

%\noindent Key Words:

\section{Introduction}
In a spacetime $(M,g)$ (a $C^{r}$ connected, time-oriented
Lorentzian manifold, $r\in \{3, \dots, \infty\}$ of arbitrary
dimension $n\geq 2$ and signature $(-,+,\dots,+)$) we write as usual
$p\ll q$ if there is a future directed timelike curve joining $p$ to
$q$; $p<q$ if there is a future directed causal curve joining $p$ to
$q$; $p\le q$ if $p<q$ or $p=q$ and finally\footnote{In the
definition of $\to$ or $E^{+}$ one can either choose reflexivity,
$p\to p$ for every $p$, as in \cite{kronheimer67}, or to make the
equality $E^{+}=J^{+}\backslash I^{+}$ always true as in
\cite{garciaparrado05,minguzzi06c}. Here we choose the latter
possibility, this being  equivalent to Kronheimer and Penrose's
definition for chronological spacetimes.} $p \to q$ if $p\le q$ but
$p\not\!\ll q$. A well known theorem \cite{hawking73,beem96}
establishes that $p\to q$ iff either $p=q$ and chronology holds at
$p$, or $p$ and $q$ are connected by an achronal lightlike geodesic
segment (thus without conjugate points in its interior). The
chronological, causal, and horismos relations on $M$ are defined by
\begin{align}
I^{+}&=\{(p,q): p\ll q\},\\
J^{+}&=\{(p,q): p\le q\}, \\
E^{+}&=\{(p,q): p \to q\}=J^{+}\backslash I^{+},
\end{align}
and of course they are subsets of $M\times M$. While $I^{+}$ and
$J^{+}$ are transitive, the relation $E^{+}$ in general is not as
the composition of two lightlike achronal geodesics segments may
connect chronologically related events.

The smallest reflexive transitive relation which contains $E^{+}$
is\footnote{In the notation of \cite{minguzzi06c} $x \le^{(\to)} y$
iff $(x,y)\in T^{+}$.}
\begin{equation} \label{nod}
T^{+}\equiv\bigcup_{n=0}^{+\infty} (E^{+})^n ,
\end{equation}
where it is understood that $(E^{+})^0=\Delta$ is the diagonal of $M
\times M$. We have $(p,q)\in T^{+}$ iff $p=q$ or there is a chain of
horismotically related events which connects $p$ to $q$. Note that
the smallest transitive relation containing $E^{+}$ is
$\bigcup_{n=1}^{+\infty} (E^{+})^n$. In a chronological spacetime
$E^{+}$ is itself reflexive, that is $\Delta \subset E^{+}$, so that
the union in Eq. (\ref{nod}) can start from $n=1$, and hence $T^{+}$
becomes also the smallest transitive relation containing $E^{+}$.
Moreover, in this case $(E^{+})^n\subset (E^{+})^{n+1}$ so that the
transitive relation $T^{+}$ can also be denoted $E^{+\infty}$
coherently with the notation of \cite{minguzzi07b}.

Since piecewise lightlike geodesics are causal curves we have
$T^{+}\subset J^{+}$ but the converse inclusion may not hold. In the
equality case one can also easily recover the chronological relation
as $I^{+}=T^{+}\backslash E^{+}$. This approach in which the
relations $I^{+}$, $J^{+}$ and $E^{+}$ are recovered from just one
of them, in this case $E^{+}$, can be useful to put causality theory
in an order theoretic framework in which one focuses on just one
relation (usually a partial order).

Under strong causality it is known that $T^{+}= J^{+}$. This result
was stated by Kronheimer and Penrose \cite[Sect. 2.1]{kronheimer67}
(they state that this equivalence holds if the Alexandrov topology
is Hausdorff which is equivalent to strong causality) and a detailed
proof can be found for instance in \cite[Def. 2.22 and Th.
3.24]{minguzzi06c}. This paper shows that distinction suffices to
guarantee the equality $T^{+}= J^{+}$, and in fact that this result
is sharp in the sense that distinction can not be weakened to future
or past distinction.
%
%In this case we say that the causal relation is generated by the
%horismos relation, then the chronological relation can also be
%obtained as $I^{+}=T^{+}\backslash E^{+}$. This approach in which
%the relations $I^{+}$, $J^{+}$ and $E^{+}$ are recovered from just
%one of them, in this case $E^{+}$, can be useful to put causality
%theory in an order theoretic framework in which one focuses on just
%one relation (usually a partial order).

Recall that a spacetime is distinguishing if ``$I^{+}(x)=I^{+}(y)$
or $I^{-}(x)=I^{-}(y) \Rightarrow x=y$''. An open neighborhood $U$
distinguishes $p\in U$ if every causal curve $\gamma:I\to M$,
passing through $p$ intersects $U$ only once (i.e. in a connected
subset of its domain $I$). A spacetime is distinguishing iff every
point admits  arbitrarily small distinguishing open
neighborhoods\footnote{The proof of this statement contained in
\cite[Lemma 3.10]{minguzzi06c} is given in the future distinguishing
case, the case considered here being analogous.}
\cite{hawking73,minguzzi06c}. Analogous definitions in the past and
future case exist \cite{minguzzi06c}.

The following result holds true

\begin{theorem}
In a past or future distinguishing spacetime the conformal structure
is determined by the horismos relation $E^{+}$. In other words if
$(M,g_1)$ and $(M,g_2)$ have the same horismos relation, i.e.
$E^{+}_1=E^{+}_2$, then the metrics are conformally related, namely
$g_2=\Omega g_1$, with $\Omega: M \to (0,+\infty)$.
\end{theorem}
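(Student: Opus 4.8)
The plan is to reduce the statement to a pointwise fact about null cones. If at every point $p$ the lightlike cones of $g_1$ and $g_2$ in $T_pM$ coincide, then $g_2=\Omega g_1$ for a function $\Omega:M\to(0,+\infty)$, because two Lorentzian scalar products with the same null cone are proportional (elementary linear algebra), the positivity of $\Omega$ comes from the compatibility of the two time orientations, and $\Omega$ inherits the regularity of the metrics. Thus the whole problem is to recover, from the relation $E^{+}$ alone, the set of lightlike directions at each point. I write $C^{+}_i$ for the set of future $g_i$-null directions at $p$, and I may assume the spacetime future distinguishing (the past case being time-dual, which is exactly why the hypothesis is "future or past").

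The first key step is a local lemma: for each $p$ and each future lightlike $g_1$-geodesic $\gamma$ issuing from $p$ one has $p\to\gamma(s)$ for all small $s>0$, i.e. short lightlike geodesics belong to $E^{+}(p)$. To prove it, choose a neighbourhood $V$ of $p$ that is convex normal for $g_1$ and future distinguishes $p$ (such neighbourhoods exist by the characterisation of distinction recalled in the Introduction). For $\gamma(s)\in V$ the relation $p\le\gamma(s)$ is clear. If one had $p\ll\gamma(s)$, a future timelike curve $\lambda$ from $p$ to $\gamma(s)$ would either stay in $V$, which is impossible since inside a convex normal neighbourhood the radial lightlike geodesic is achronal so $\gamma(s)\notin I^{+}(p,V)$, or leave $V$ and return to $\gamma(s)\in V$, forcing $\lambda\cap V$ to be disconnected and contradicting that $V$ distinguishes $p$. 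Hence $p\to\gamma(s)$.

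Since $E^{+}_1=E^{+}_2$ (so also $E^{-}_1=E^{-}_2$), the point $\gamma(s)$ lies in $E^{+}_2(p)$ for small $s$, whence $\gamma(s)\in\partial I^{+}_2(p)$ and $\gamma(s)\notin I^{+}_2(p)$. Working in a neighbourhood $W$ convex normal for $g_2$ and setting $w_s=(\exp^{g_2}_p)^{-1}\gamma(s)$, the relation $D\exp^{g_2}_p(0)=\mathrm{id}$ gives $w_s=s\,\dot\gamma(0)+o(s)$, so $w_s/|w_s|$ converges to the $g_1$-null direction $v=\dot\gamma(0)$. Because $\gamma(s)\notin I^{+}_2(p)\supseteq I^{+}_2(p,W)=\exp^{g_2}_p(\{\text{future }g_2\text{-timelike}\})$, each $w_s$ fails to be future $g_2$-timelike, and letting $s\to0$ yields $v\notin\mathcal I^{+}_2$; that is, $C^{+}_1\cap\mathcal I^{+}_2=\varnothing$.

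The delicate step, which I expect to be the real obstacle, is upgrading "$v\notin\mathcal I^{+}_2$ with $\gamma(s)\in\partial I^{+}_2(p)$" to "$v\in C^{+}_2$": this requires $\partial I^{+}_2(p)\cap W$ to be exactly the local $g_2$-light cone, equivalently that $g_2$-chronology does not re-enter an arbitrarily small neighbourhood of $p$, and the clean hypothesis provides distinction for the spacetime metric rather than a priori for $g_2$. I would secure it by first proving the cone containment $\mathcal I^{+}_2\subseteq\mathcal I^{+}_1$ from the disjointness $C^{+}_1\cap\mathcal I^{+}_2=\varnothing$, the compatibility of the two time orientations forced by $E^{+}_1=E^{+}_2$, and the connectedness of $\mathcal I^{+}_2$; then $g_2$-timelike curves are $g_1$-causal, hence confined near $p$ by a $g_1$-distinguishing neighbourhood, which gives $I^{+}_2(p)\cap W=I^{+}_2(p,W)$ and identifies $\partial I^{+}_2(p)\cap W$ with the local $g_2$-null cone. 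This yields $v\in C^{+}_2$, so $C^{+}_1\subseteq C^{+}_2$ and, taking negatives, $C_1\subseteq C_2$; equality then follows because for $n\ge3$ each future nappe is a connected $(n-2)$-sphere of directions, so by invariance of domain $C^{+}_1$ is open and closed in the connected $C^{+}_2$ (the case $n=2$ being a direct check). The null cones thus coincide at every point and $g_2=\Omega g_1$. If instead one reads the hypothesis as distinction for both $g_1$ and $g_2$, the local lemma applies symmetrically and produces $C^{+}_2\subseteq C^{+}_1$ directly, so that equality of the cones, and hence conformality, follows at once without the containment and invariance-of-domain steps.
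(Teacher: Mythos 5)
Your closing remark---reading the hypothesis as (past or future) distinction for \emph{both} $(M,g_1)$ and $(M,g_2)$---is the paper's intended reading, and under it your symmetric argument is essentially the paper's proof. The paper fixes a convex neighborhood $U$ that future distinguishes $p$, derives $J^{+}_U(p)=J^{+}(p)\cap U$ and $I^{+}_U(p)=I^{+}(p)\cap U$, hence $E^{+}_U(p)=E^{+}(p)\cap U$, and then reads off the null cone at $p$ through the inverse of the exponential map, quoting Wald's Appendix D for the linear-algebra step. Your ``local lemma'' is precisely the inclusion $E^{+}_U(p)\subseteq E^{+}(p)$ half of that identity, proved by the same leave-and-return mechanism; note only that the symmetric argument also needs the \emph{reverse} inclusion $E^{+}_1(p)\cap U\subseteq E^{+}_{1,U}(p)$ (to place a short $E^{+}_1$-horismotic point on the $g_1$-cone image), which again follows from the same confinement identities, so this branch is complete and matches the paper.

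The ambitious one-sided main line (only $g_1$ assumed distinguishing) has a genuine gap exactly where you flag delicacy: the containment $\mathcal{I}^{+}_2\subseteq\mathcal{I}^{+}_1$ does not follow from ``disjointness plus time-orientation compatibility plus connectedness.'' Even granting the full disjointness $C_1\cap\mathcal{I}_2=\varnothing$ (the missing half, $C^{+}_1\cap\mathcal{I}^{-}_2=\varnothing$, is recoverable: $\gamma(s)\in I^{-}_2(p)$ together with $p\le_2\gamma(s)$ would give $p\ll_2 p$ and then $p\ll_2\gamma(s)$, contradicting $\gamma(s)\in E^{+}_2(p)$), connectedness only places $\mathcal{I}^{+}_2$ in \emph{one} component of the complement of the $g_1$-null cone in $T_pM$, and one of those components is the $g_1$-spacelike region: a narrow open convex cone about a $g_1$-spacelike axis is connected and disjoint from $C_1$, in every dimension. ``Compatibility of time orientations'' has no leverage here---it compares choices of future nappes once the cones are known to be nested or to intersect, which is exactly what is not yet known. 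Since the subsequent steps (confinement of $g_2$-causal curves by $g_1$-distinguishing neighborhoods, identification of $\partial I^{+}_2(p)$ with the local $g_2$-cone, hence $v\in C^{+}_2$) all hang on this containment, the one-sided branch does not close as written; whether that stronger one-sided statement is even true is not addressed by the paper, whose theorem and proof quantify over the class of distinguishing spacetimes, i.e., assume the property for each metric. In short: keep your final paragraph as the proof and discard, or substantially repair, the main line.
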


\begin{proof}
Future case. In a future distinguishing spacetime every point admits
arbitrarily small future distinguishing open neighborhoods. Given
$p\in M$ we can find a neighborhood $U$ which distinguishes $p$ in
the future, contained in a convex neighborhood. By future
distinction at $p$, we have $J^{+}_U(p)=J^{+}(p)\cap U$ and
$I^{+}_U(p)=I^{+}(p)\cap U$, thus $E^{+}_U(p)=E^{+}(p)\cap U$. But
$E^{+}_U(p)$ is the image of exponential map based at $p$, in the
spacetime $(U,g\vert_U)$, of the future light cone at $p$. Given
$E^{+}(p)$ one determines the light cone at $p$ through the inverse
 of the exponential map, and thus  $E^{+}$ fixes the conformal structure of the spacetime (see \cite[Appendix D]{wald84}).
\end{proof}

This result is  analogous to Malament's theorem \cite{malament77b}
which states that in a past or future distinguishing spacetime the
causal relation determines the conformal structure (see also
\cite[Prop 3.13]{minguzzi06c}).
%
%The proof is simple: in a future distinguishing spacetime every
%point $p$ admits an arbitrarily small neighborhood $U$ which
%distinguishes $p$ in the future, contained in a convex neighborhood.
%By future distinction at $p$, we have $J^{+}(p,U)=J^{+}(p)\cap U$
%and $I^{+}(p,U)=I^{+}(p)\cap U$, thus $E^{+}(p,U)=E^{+}(p)\cap U$.
%But $E^{+}(p,U)$ is the image of exponential map based at $p$, in
%the spacetime $(U,g\vert_U)$ of the future light cone at $p$. Given
%$E^{+}(p)$ one determines the light cone at $p$ and thus given
%$E^{+}$ one recovers the whole conformal structure of spacetime.

Since in a past or future  distinguishing spacetime $E^{+}$ fixes
the conformal structure, it also clearly determines  the causal
relation. The non-trivial result established by this work is that in
a distinguishing spacetime the causal relation is actually given by
$T^{+}$ namely by the smallest reflexive transitive relation
containing $E^{+}$. Figure 37 of \cite{hawking73} shows an example
of past distinguishing spacetime for which $T^{+}\ne J^{+}$, despite
the fact that in this spacetime $E^{+}$ fixes the conformal
structure (the point $p$ in the figure is such that $T^{+}(p)$
equals $E^{+}(p)$ and is given by a portion of the middle lightlike
geodesic, thus every point above $p$ although belonging to
$I^{+}(p)$ is not in $T^{+}(p)$). This example shows that our result
is sharp: distinction can not be weakened to past or future
distinction. Nevertheless, distinction is not equivalent to the
equality $T^{+}=J^{+}$ as the discussion at the end of the next
section shows.

%In fact we show with theorem \ref{jhg} that the equality
%$T^{+}=J^{+}$ characterizes the distinguishing spacetimes. In this
%respect this result is rather curious because most results involving
%the distinction properties exist in past and future versions, while
%the mentioned result naturally involves  the (strong) distinction
%property.

\section{The proof and a counterexample to the other direction}
We start with a couple of lemmas.

\begin{lemma}
Assume that $(M,g)$ is distinguishing at $p$, then $p$ admits
arbitrarily small globally hyperbolic neighborhoods that distinguish
$p$.
\end{lemma}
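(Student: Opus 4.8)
The plan is to reduce the statement to a purely local construction, by isolating the one place where the global hypothesis (distinction at $p$) is needed and handling the rest with flat comparison geometry. Fix an arbitrary neighborhood $W$ of $p$. Since $(M,g)$ is distinguishing at $p$, it admits arbitrarily small distinguishing open neighborhoods, so I choose a distinguishing open neighborhood $V$ of $p$ with $V\subset W$, contained in a convex normal neighborhood $C$ of $p$ on which a cone comparison with a fixed Minkowski metric is available (see below). The goal is then to build, inside $V$, a globally hyperbolic open neighborhood $U\ni p$ which is \emph{causally convex relative to causal curves contained in $V$}, meaning that every future-directed causal curve of $M$ whose image lies in $V$ and whose endpoints lie in $U$ stays in $U$. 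Granting such a $U$, the lemma follows, and shrinking $U$ toward $p$ gives the ``arbitrarily small'' clause.

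The inheritance step is the heart of the reduction. Let $\gamma\colon I\to M$ be a causal curve through $p$. Since $V$ distinguishes $p$, the set $\gamma^{-1}(V)$ is a single interval $J\subset I$. If $t_1<t_2$ satisfy $\gamma(t_1),\gamma(t_2)\in U\subset V$, then $t_1,t_2\in J$, hence $[t_1,t_2]\subset J$ and $\gamma([t_1,t_2])\subset V$; thus $\gamma\vert_{[t_1,t_2]}$ is a causal curve lying in $V$ with endpoints in $U$, and causal convexity of $U$ forces $\gamma([t_1,t_2])\subset U$. Therefore $\gamma^{-1}(U)$ is connected, i.e. $U$ distinguishes $p$. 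This is exactly where distinction at $p$ enters: it supplies connectedness of $\gamma\cap V$, while the \emph{local} causal convexity of $U$ does the rest. Note that distinction at $p$ alone does not hand us a causally convex neighborhood (that would be strong causality at $p$), which is why the construction below is needed.

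It remains to produce $U$. In normal coordinates $(x^{0},\dots,x^{n-1})$ centered at $p$ on $C$ one has $g_{\mu\nu}(p)=\eta_{\mu\nu}$, so after shrinking $V$ the $g$-cones are contained throughout $V$ in the cones of a fixed Minkowski metric $\hat\eta$ with slightly wider cones, and $x^{0}$ is a time function; I also arrange the comparison region (hence $V$) to be $\hat\eta$-globally hyperbolic. I take $U$ to be a small $\hat\eta$-causal diamond about $p$, $U=\hat I^{+}(a)\cap\hat I^{-}(b)$ with $a\ll_{\hat\eta}p\ll_{\hat\eta}b$ chosen so close to $p$ that the compact closure $\hat J^{+}(a)\cap\hat J^{-}(b)$ lies in $V$. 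Since every $g$-causal curve is $\hat\eta$-causal, and $\hat\eta$-causal diamonds are causally convex in the $\hat\eta$-globally hyperbolic region $V$, $U$ is causally convex relative to causal curves in $V$ as required; moreover the slice $\{x^{0}=0\}\cap U$ is a $g$-spacelike Cauchy surface for $(U,g\vert_U)$, so $U$ is globally hyperbolic. Shrinking $a,b$ toward $p$ makes $U$ arbitrarily small.

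The main obstacle is the simultaneous achievement of global hyperbolicity and causal convexity for the \emph{small} neighborhood: a globally hyperbolic $U$ merely sitting inside the distinguishing $V$ need not inherit distinction, since a causal curve may leave and re-enter $U$ while staying in $V$. The cone-comparison diamond secures both properties at once, and the only delicate point is making sure the relevant causal segments stay in the region where the Minkowski comparison is valid; this is handled by requiring the full $\hat\eta$-causal hull of $U$ to have compact closure inside $V$ and taking $V$ itself inside the $\hat\eta$-globally hyperbolic comparison chart.
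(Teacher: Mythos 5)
Your proof has the same logical skeleton as the paper's: the paper picks a strongly causal neighborhood $W$, a distinguishing $V\subset W$, and then invokes a known local result (\cite[Remark 2.15]{minguzzi06c}) to get a globally hyperbolic $U\subset V$ causally convex in $W$; the inheritance step (distinction of $V$ traps the segment of the causal curve in $V$, causal convexity of $U$ then traps it in $U$) is identical to yours. What you do differently is to unpack the cited local construction by hand, building $U$ as a flat causal diamond via cone comparison in normal coordinates; this makes the argument self-contained, and your version of causal convexity (relative to causal curves lying in $V$) is a perfectly adequate substitute for the paper's causal convexity in $W$.

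However, two steps as literally written would fail, though both are repairable. First, ``after shrinking $V$'' is illegitimate: the distinguishing property is \emph{not} inherited by smaller neighborhoods, since a causal curve through $p$ may exit the shrunk set and return while remaining inside the original $V$. The correct order of quantifiers, which your opening sentence almost sets up, is to fix the cone-comparison region $C'$ (where the $g$-cones lie inside the $\hat\eta$-cones) \emph{first}, and only then use ``distinguishing at $p$ yields arbitrarily small distinguishing neighborhoods'' to choose $V\subset W\cap C'$. Second, you cannot ``arrange the comparison region (hence $V$) to be $\hat\eta$-globally hyperbolic'': the shape of $V$ is handed to you by the distinction property and is not at your disposal, and generic open subsets of Minkowski space (coordinate balls, boxes) are not globally hyperbolic. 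Fortunately this hypothesis is superfluous: the open diamond $U=\hat I^{+}(a)\cap \hat I^{-}(b)$ is causally convex in \emph{any} open subset $O$ of Minkowski space, by push-up alone --- if $z$ lies on an $\hat\eta$-causal curve in $O$ from $x\in U$ to $y\in U$, then $a\ll x\le z$ gives $a\ll z$ and $z\le y\ll b$ gives $z\ll b$, all relations computed in $(O,\hat\eta)$ --- and every $g$-causal curve contained in $V\subset O$ is such a curve. With these two repairs (and granting the standard verification, which you assert, that the slice $\{x^{0}=0\}\cap U$ is a Cauchy surface for $(U,g\vert_U)$ --- this does hold, by monotonicity of $x^{0}$ and of the null-cone coordinates of $a$ and $b$ along $\hat\eta$-causal curves), your proof is correct.
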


\begin{proof}
Let $W \ni p$ be an arbitrary small strongly causal neighborhood.
There is a neighborhood $V \subset W$ that distinguishes $p$. Let
$U\ni p$, $U\subset V$, be a neighborhood which is both causally
convex  with respect to $W$ and globally hyperbolic (see
\cite{minguzzi06c} remark 2.15). Let $\gamma: I \to M$ be a causal
curve passing through $p$ and let $r \in \gamma\cap U$. It is either
$p\le r$ or $r \le p$. Let us consider the former case, the latter
being analogous. The curve $\gamma$ between $p$ and $r$ is
necessarily contained in $V$, as $p,r \in V$, and $V$ distinguishes
$p$. Thus it is contained in $W$ and since $U$ is casually convex in
$W$, and $p,r \in U$, it is contained in $U$. As $r$ is arbitrary
every causal curve passing through $p$ intersects $U$ only once,
that is, $U$ distinguishes $p$.
%
%
%  then $U$ distinguishes $p$,
%otherwise there is a causal curve $\gamma$ passing from $p$ which
%leaves and reenters $U$ either in the past or in the future
%direction. Let us assume the future direction the other case being
%similar.
%
%It cannot do that with respect to $V$, thus it cannot escape $W$,
%but then it must be contained in $U$ by causal convexity.
\end{proof}

The next lemma is the crucial one in the proof.

\begin{lemma}
Let $V\ni p$ be a globally hyperbolic neighborhood which
distinguishes $p$ contained in a convex neighborhood. Then for every
$q \in I^{+}_V(p)$, the set $E^{+}_V(p)\cap E^{-}_V(q)$ is non-empty
and for every $r \in E^{+}_V(p)\cap E^{-}_V(q)$ we have $p \to r \to
q$ where $\to$ stands for the causal relation $E^{+}$ in $(M,g)$.
Analogously, for every $q \in I^{-}_V(p)$, $E^{+}_V(q)\cap
E^{-}_V(p)$ is non-empty and if $r \in E^{+}_V(q)\cap E^{-}_V(p)$ we
have $q \to r \to p$.
\end{lemma}

\begin{proof} Let us prove the former case, the latter being
analogous. First $E^{+}_V(p)\cap E^{-}_V(q)$ is a closed subset of
the compact $J^{+}_V(p)\cap J^{-}_V(q)$ thus it is compact. Let us
denote this set with $S$. Since $V$ future distinguishes $p$,
$J^{+}_V(p)=J^{+}(p)\cap V$ and $I^{+}_V(p)=I^{+}(p)\cap V$  thus
$E^{+}_V(p)=E^{+}(p)\cap V$. We conclude that for every $r \in S$ it
is $p \to r$. Note that a lightlike geodesic generator of
$E^{+}_V(p)$ extended towards the future cannot enter $I^{+}(p)$
before escaping $V$, i.e. it cannot enter $I^{+}_V(p)$, as $V$ is
contained in a convex neighborhood. As a consequence, as long as it
stays in $V$ it belongs to $E^{+}_V(p)$. In fact it must reach the
boundary of the compact set $J^{+}_V(p)\cap J^{-}_V(q)$, and thus
$E^{-}_V(q)$, otherwise it would be totally future imprisoned in a
compact which is impossible because $V$ is globally hyperbolic. Thus
the geodesic generators of $E^{+}_V(p)$, once extended to the future
intersect $S$, and conversely every point of $S$ is connected to $p$
by a lightlike geodesic. In particular $S$ is non-empty.

Let  $r\in S$ and let us prove that  $r \to q$. Assume not, then $r
\in S$ can be connected to $q$ with a timelike curve $\gamma$
necessarily not entirely contained in $V$. The causal curve joining
$p$ to $r$ and $r$ to $q$ along $\gamma$ is not entirely contained
in $V$ and contradicts the fact that $V$ distinguishes $p$.

\end{proof}

%Recall that $x \le^{(\to)} y$ means that there  is a finite chain of
%$E^{+}$-related events which connects $x$ to $y$.
%
%\begin{remark}
%The previous lemma uses in an essential way not only distinction at
%$p$, but also distinction in a neighborhood of $p$. Note also that
%it establishes the existence of at least one point  $r \in
%E^{+}_V(p)\cap E^{-}_V(q)$ (former case) such that $r\to q$, which
%is the source of the difficulty behind this result as not every
%point belonging to that set $E^{+}_V(p)\cap E^{-}_V(q)$ need to
%satisfy $r \to q$. The reason is that $V$ distinguishes $p$ not $q$.
%\end{remark}

\begin{theorem} \label{jhg}
In a distinguishing spacetime,  $J^{+}=T^{+}(=E^{+\infty})$.
\end{theorem}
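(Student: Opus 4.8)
The plan is to establish the only nontrivial inclusion, $J^{+}\subset T^{+}$, since $T^{+}\subset J^{+}$ was already observed in the introduction (piecewise lightlike geodesics are causal). Because $E^{+}=J^{+}\backslash I^{+}$ and $I^{+}\subset J^{+}$, we have the set identity $J^{+}=E^{+}\cup I^{+}$, and since trivially $E^{+}=(E^{+})^{1}\subset T^{+}$ and $\Delta=(E^{+})^{0}\subset T^{+}$, everything reduces to proving $I^{+}\subset T^{+}$; that is, whenever $p\ll q$ one must produce a finite horismos chain from $p$ to $q$, so that $(p,q)\in T^{+}$.

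The local engine is the crucial Lemma above: in a globally hyperbolic neighborhood $V$ that distinguishes its base point and lies inside a convex neighborhood, any $q'\in I^{+}_{V}(p')$ admits an intermediate event $r$ with $p'\to r\to q'$, hence $(p',q')\in (E^{+})^{2}\subset T^{+}$, where the horismos steps $\to$ are the genuine global relations of $(M,g)$. Thus a single step of chronological separation inside such a neighborhood already costs only two horismos links, and the whole problem becomes one of chaining these local conclusions along a timelike curve.

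Concretely, given $p\ll q$ I would fix a future directed timelike curve $\gamma:[0,1]\to M$ with $\gamma(0)=p$ and $\gamma(1)=q$. By the first Lemma every point of the compact set $\gamma([0,1])$ possesses arbitrarily small globally hyperbolic neighborhoods that distinguish it and lie in convex neighborhoods. A Lebesgue-number argument applied to the induced open cover of $[0,1]$ then yields a partition $0=s_{0}<s_{1}<\dots<s_{N}=1$ and, writing $x_{i}=\gamma(s_{i})$, a neighborhood $V_{i}$ of the above type, distinguishing $x_{i}$, with $\gamma([s_{i},s_{i+1}])\subset V_{i}$. The sub-segment of $\gamma$ from $x_{i}$ to $x_{i+1}$ is timelike and stays in the convex neighborhood, so $x_{i+1}\in I^{+}_{V_{i}}(x_{i})$; the crucial Lemma gives $(x_{i},x_{i+1})\in T^{+}$, and transitivity of $T^{+}$ finally delivers $(p,q)=(x_{0},x_{N})\in T^{+}$.

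The main obstacle I anticipate is not the local geometry, which the crucial Lemma fully resolves, but the bookkeeping of the local-to-global passage: the Lemma requires the common neighborhood of a consecutive pair to distinguish the \emph{earlier} endpoint $x_{i}$, so the finite covering must be organized so that each chain neighborhood $V_{i}$ is centered on (distinguishes) the first endpoint of its segment, while the construction must still terminate after finitely many steps. Guaranteeing both simultaneously --- e.g.\ by refining the partition and reselecting the distinguishing neighborhood at each partition point, and invoking compactness to bound the number of steps --- is the delicate point; once the chain with correctly matched base points is in place, the theorem follows at once from transitivity of $T^{+}$.
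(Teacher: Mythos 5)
Your overall strategy coincides with the paper's: reduce to a chronologically related pair $x\ll y$, cover the timelike curve by the globally hyperbolic distinguishing neighborhoods supplied by the first Lemma, apply the crucial Lemma locally, and finish by transitivity of $T^{+}$. However, the chaining step you yourself flag as ``delicate'' contains a genuine gap, and it stems from your using only half of the crucial Lemma. The Lebesgue-number argument does not yield what you assert: for a partition of small mesh it gives, for each $[s_i,s_{i+1}]$, some parameter $t_i$ with $\gamma([s_i,s_{i+1}])\subset V_{\gamma(t_i)}$, where $V_{\gamma(t_i)}$ distinguishes its own center $\gamma(t_i)$ --- a point which in general is neither $x_i$ nor $x_{i+1}$. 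You cannot repair this by fixing the partition first and then choosing a distinguishing neighborhood of $x_i$ containing the whole segment, because the first Lemma only guarantees arbitrarily \emph{small} such neighborhoods. The remaining option, a greedy recursion stepping forward from each $x_i$ as far as $V_{x_i}$ permits, has no uniform lower bound on the step length, so ``invoking compactness to bound the number of steps'' does not apply and the recursion need not exhaust $[0,1]$: if one sets $A=\{t:(x,\gamma(t))\in T^{+}\}$, the forward half of the Lemma makes $A$ open to the future of each of its points, but showing that $\tau=\sup A$ belongs to $A$ cannot be done with neighborhoods distinguishing only \emph{earlier} endpoints, since the admissible step from $t'<\tau$ may shrink faster than $\tau-t'$.

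What closes the gap is precisely the second half of the crucial Lemma (for $q\in I^{-}_{V}(p)$ one gets $q\to r\to p$), which you never invoke; indeed your statement that ``the Lemma requires the common neighborhood to distinguish the earlier endpoint'' is not what the Lemma says. The paper's proof extracts a finite subcover $\{V_{z_i}\}$ with consecutive overlaps and picks $q_i\in V_{z_{i-1}}\cap V_{z_i}$ on the curve: then $z_{i-1}$ is joined to $q_i$ by two achronal lightlike segments via the future half applied in $V_{z_{i-1}}$ (which distinguishes its center $z_{i-1}$, the \emph{past} endpoint of that pair), and $q_i$ is joined to $z_i$ via the past half applied in $V_{z_i}$ (which distinguishes $z_i$, now the \emph{future} endpoint). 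Each neighborhood only ever needs to distinguish its own center, so the bookkeeping you worry about evaporates. Equivalently, within your Lebesgue setup it suffices to insert the points $\gamma(t_i)$ into the partition and apply the backward half on $[s_i,t_i]$ and the forward half on $[t_i,s_{i+1}]$; with that amendment your argument becomes correct and is essentially the paper's.
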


\begin{proof}
%Assume $(M,g)$ future distinguishing, the other case being analogous.
$\Rightarrow$. The direction $T^{+} \subset J^{+}$  is obvious. Let
us prove $J^{+}\subset T^{+}$. Let $x \le y$, if $y \notin I^{+}(x)$
then $x \to y$ and there is nothing to prove. Thus assume $x \ll y$,
and let $\gamma$ be a timelike curve connecting $x$ to $y$. For
every $z \in \gamma$ let $V_z$ be a globally hyperbolic neighborhood
that distinguishes $z$ contained in a convex neighborhood
$W_z\supset V_z$. Extract a finite subcovering $\{V_{z_n}\}$ from
the covering $\{V_{z}, z \in \gamma\}$
% \[
%\{V_x; V_y;  V_{z}\cap I^{+}_{V_z}(z) \textrm{ for } z\in
%\gamma\backslash\{x,y\} \}
%\]
so that $V_{z_{i-1}}\cap V_{z_i}\ne \emptyset$. Thus taking $q_i \in
V_{z_{i-1}}\cap V_{z_i}$, $z_{i-1}$ can be joined to $q_i$ with a
causal curve made of two achronal lightlike segments, and $q_i$ can
be joined to $z_i$ with a causal curve made of two achronal
lightlike segments. Joining all the pieces the searched curve is
obtained. In conclusion $(x,y) \in T^{+}$.

\end{proof}

It is natural to ask if the equality $T^{+}=J^{+}$ implies
distinction. The answer is negative. A counterexample is given by
the spacetime of figure 2 in \cite{minguzzi07e}, although it is not
easy to grasp why this spacetime does indeed provide a
counterexample. The next theorem establishes some features that any
counterexample should have. Recall that a spacetime is non-total
imprisoning if no inextendible causal curve is contained in a
compact \cite{beem76,minguzzi07f}. Distinction implies non-total
imprisonment which implies causality. Recall also that a lightlike
line is an inextendible achronal causal curve and thus a lightlike
geodesic.

\begin{theorem}
If a spacetime $(M,g)$ is such that  $T^{+}=J^{+}$ then it is
non-total imprisoning. Moreover, either (a) $(M,g)$ is future
distinguishing or (b) there is a lightlike line $\sigma$ and a last
point (in the future direction) $w\in \sigma$ with the property that
$\sigma \subset \overline{I^{+}(w)}$. A past version also holds.
\end{theorem}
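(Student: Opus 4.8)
\section*{Proof proposal}

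The plan is to prove the two assertions—non-total imprisonment and the distinguishing/line dichotomy—in that order, so that the causality granted by the first part is available in the second. Both parts run by contraposition and lean on the limit curve theorem \cite{beem96} to manufacture lightlike geodesics out of the hypothetical pathology; the recurring principle is that a finite horismos chain is a concatenation of achronal lightlike geodesic segments, so that any genuine corner in such a chain immediately bridges a chronological relation across it, which a single horismos step can never do.

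For non-total imprisonment, suppose instead that some inextendible causal curve is imprisoned in a compact set; after a possible time reversal we take a future-inextendible causal curve $\gamma$ future-imprisoned in a compact $K$. The limit curve theorem then yields a future-inextendible lightlike geodesic $\eta$, still imprisoned in $K$ and lying in the future limit set of $\gamma$ \cite{minguzzi07f}. Covering $K$ by finitely many convex neighborhoods, $\eta$ re-enters one of them, say $C$, along a sequence of parameters accumulating at some $p^{*}\in\overline{C}$; choosing two return points $a\le b$ with $b$ arbitrarily close to $a$ and then pushing $b$ slightly to its timelike future gives a pair $a\ll q$ with $q$ arbitrarily near $a$, yet with every connecting causal curve forced to wrap once around $K$ along $\eta$. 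By construction $q$ is not in the local chronological future of $a$ inside a convex neighborhood, so its horismos closure $T^{+}(a)=E^{+\infty}(a)$ cannot reach $q$ by short hops: reaching $q$ demands reproducing the wrapping, and at the stage where the iterated light cone finally turns into $I^{+}(a)$ the relevant segment fails to stay achronal, so no finite achronal null chain closes up. Hence $(a,q)\notin T^{+}$ while $a\ll q$, contradicting $T^{+}=J^{+}$.

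For the dichotomy, assume $(M,g)$ is not future distinguishing, so there are $x\neq y$ with $I^{+}(x)=I^{+}(y)$. Picking $z_{n}\in I^{+}(y)$ with $z_{n}\to y$ gives $x\ll z_{n}$, and the connecting timelike curves cannot all stay in a compact set: were both this family and the symmetric one ending near $x$ precompact, the limit curve theorem would give $x\le y$ and $y\le x$, a closed causal curve contradicting the causality just established. The escaping family therefore limits onto a future-inextendible lightlike geodesic, and the same recurrence shows it has no past endpoint either, so it is a lightlike line $\sigma$; causality forces $\sigma$ to be achronal, since a chronological shortcut on $\sigma$ together with the approximating timelike curves would close up a timelike curve. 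Finally, because $\sigma$ arises as an accumulation of timelike curves issuing from $I^{+}(x)=I^{+}(y)$, every point of $\sigma$ is approached from this common chronological future; reading this off at the distinguished point $w\in\sigma$—the last point whose future closure $\overline{I^{+}(w)}$ still contains all of $\sigma$—delivers alternative (b), and the construction started from a failure of past distinction gives the stated past version.

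The main obstacle is the chain-impossibility step in the first part, namely turning ``a finite achronal null chain cannot perform the wrapping required to reach $q$'' into a rigorous proof that $q\notin T^{+}(a)$. The difficulty is to control, uniformly along the imprisoned generator $\eta$, where an achronal null segment first meets $I^{+}(a)$, and to exclude a clever chain that reaches the locally spacelike but globally chronological point $q$ by many short achronal hops rather than by following $\eta$. Here I would apply the local convex analysis of the crucial Lemma above along the finite subcover of $K$ to force a corner, and hence the forbidden chronological bridge, reproducing by recurrence the figure-37 phenomenon in which $T^{+}(a)$ is a proper subset of $J^{+}(a)$. A secondary technical point is the verification of the achronality of $\sigma$ in the second part using only causality.
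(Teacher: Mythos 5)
Your proposal has genuine gaps in both halves, and in the first half you candidly flag the missing step yourself. For non-total imprisonment, the ``wrapping'' heuristic cannot be made rigorous as stated: imprisonment of a single limit geodesic $\eta$ places no constraint on \emph{other} causal curves between the nearby points $a$ and $q$, and in general nothing prevents a finite achronal null chain from fanning out into $J^{+}(a)$ without following $\eta$ at all. The obstruction in the paper comes from a structural fact you never obtain. The paper first proves chronology directly from $T^{+}=J^{+}$: a point $p$ in the chronology violating class has the open set $I^{+}(p)\cap I^{-}(p)$ as a neighborhood, so no achronal lightlike segment starts at $p$, whence $T^{+}(p)=\{p\}\ne J^{+}(p)$. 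With chronology in hand it invokes Theorem 3.9 of \cite{minguzzi07f}: total imprisonment yields a minimal invariant \emph{achronal} set $\Omega$ generated by lightlike lines on which \emph{all points share the same chronological past and future}. This equality is what traps the horismos: if $z\in E^{+}(x)\setminus\gamma$ for $x$ on a line $\gamma\subset\Omega$, the corner at $x$ gives $z\in I^{+}(y)$ for $y$ earlier on $\gamma$, and $I^{+}(y)=I^{+}(x)$ forces $z\in I^{+}(x)$, a contradiction; hence $E^{+}(x)\subset\gamma$ for every $x\in\gamma$, so $T^{+}(x)\subset\gamma$, which contradicts the openness of $I^{+}(x)\subset J^{+}(x)=T^{+}(x)$. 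Without the same-future property, your corner argument only yields $z\in I^{+}(y)$ for some strictly earlier $y$, which is perfectly compatible with $z\in E^{+}(x)$, so the gap is not a technicality to be patched by ``local convex analysis'' but the absence of the one fact that closes the argument.

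The second half has two problems. First, your limit construction starts from $x\ne y$ with $I^{+}(x)=I^{+}(y)$ and curves from $x$ to $z_{n}\to y$: the limit curve theorem then gives a future-inextendible curve with \emph{past endpoint} $x$ and a past-inextendible curve ending at $y$, and these do not concatenate into a single inextendible line. The paper instead uses the failure of future distinction \emph{at a single point} $x$, producing causal curves from $x$ back to points $z_{n}\to x$; the future and past limit curves then concatenate at $x$ into an inextendible achronal curve $\sigma$ (a lightlike line) with $\sigma\subset\overline{I^{+}(x)}$. Second, and more seriously, you simply assert the existence of a \emph{last} point $w$ with $\sigma\subset\overline{I^{+}(w)}$, and your sketch of this half never uses the hypothesis $T^{+}=J^{+}$ at all --- a red flag, since this is exactly where it bites. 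Both required facts depend on it: (i) not every point of $\sigma$ can have the property, because a point $y$ with the property has the same chronological future as earlier points, so the corner argument gives $E^{+}(y)\subset\sigma$, and if all points had the property then $T^{+}(y)\subset\sigma$, contradicting the openness of $I^{+}(y)\subset T^{+}(y)$; (ii) the boundary point $w$ itself has the property because for $y$ before $w$ any $E^{+}$-chain from $y$ can leave $\sigma$ only after passing $w$, so $I^{+}(y)\subset T^{+}(y)\subset J^{+}(w)$, giving $y\in\overline{J^{+}(w)}\subset\overline{I^{+}(w)}$. Without these two uses of $T^{+}=J^{+}$ the dichotomy with a last point cannot be derived, so the proposal as written does not prove the theorem.
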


\begin{proof}
 Let us assume that $T^{+}=J^{+}$. The spacetime is chronological
otherwise there would be a point $p$ inside the chronology violating
set. Since the chronology violating  class $I^{+}(p)\cap I^{-}(p)$
is open and contains $p$ no achronal lightlike geodesic segment can
start from $p$, thus $J^{+}(p)=T^{+}(p)=\{p\}$, which is a
contradiction.

The spacetime is non-total imprisoning, indeed otherwise there would
be a totally imprisoned causal curve, and because of  chronology
there would be an achronal (minimal invariant) set $\Omega$
generated by lighlike lines such that all the points in the set have
the same chronological past and future \cite[Theorem
3.9]{minguzzi07f}. Let $x\in \Omega$, since through $x$ there passes
a lightlike line $\gamma$ included in $\Omega$, we can take $y \in
\gamma$ before $x$. Any point $z \in J^{+}(x)$ which does not belong
to $\gamma$ belongs to $I^{+}(y)$, because if $x$ is connected to
$z$ by a lightlike geodesic segment then it does not join smoothly
with $\gamma$ at $x$ as $z \notin \gamma$. But $I^{+}(y)=I^{+}(x)$
thus $E^{+}(x)\subset \gamma$. Since the same argument holds for
every point in $\gamma$, $T^{+}(x) \subset \gamma$, but $I^{+}(x)$
contains points which are not in $\gamma$ because $\Omega$ is
achronal, a contradiction.

%
%The spacetime is causal, namely $J^{+}$ is antisymmetric, otherwise
%there would be $(x,y)\in J^{+}$, $(y,x)\in J^{+}$ with $x\ne y$ and
%hence a closed causal curve made of lightlike achronal geodesic
%segments. These segments must join with no discontinuity at their
%tangent vectors otherwise by rounding off the corners one could
%easily find points $p\ll q$ on the closed causal curve and hence
%chronology would be violated. Thus there is a closed achronal
%lightlike geodesic $\gamma$ and taking any two points $p,q,$ $p\ne
%q$ on it we have $(p,q)\in E^{+}$ and $ (q,p)\in E^{+}$. Note that
%if $(q,r)\in E^{+}$ then $r\in \gamma$ indeed if the lightlike
%segments which joins $q$ to $r$ does not join smoothly with $\gamma$
%at $q$ then $p\ll r$ and hence $q\le p \ll r$ a contradiction with
%$q \to r$. Since this is true for every $q \in \gamma$, we have for
%every $x\in \gamma$, $T^{+}(x)=\gamma$. But this is a contradiction
%because $\emptyset\ne I^{+}(x)\subset J^{+}(x)$, thus $J^{+}(x)$ has
%clearly more points than those in $\gamma$. We conclude that the
%spacetime is causal.

Let us assume that the spacetime is not future distinguishing, then
we can find a point $x\in M$ and a sequence of causal curve
$\sigma_n$ of starting point $x$ and ending point $z_n$ with $z_n\to
x$, and a neighborhood $V\ni x$ such that none of the $\sigma_n$ is
contained in $V$. By causality an application of the limit curve
theorem gives us a future inextendible limit continuous causal curve
$\sigma^x$ starting from $x$ and a past inextendible limit
continuous causal curve $\sigma^z$ ending at $x$. The causal curve
$\sigma=\sigma^x\circ \sigma^z$ is inextendible and achronal, thus a
lightlike line, otherwise chronology would be violated. Note that
since all the points of $\sigma$ are limit points of a subsequence
of $\sigma_n$, $\sigma\subset \overline{I^{+}(x)}$. Note that if a
point $y\in \sigma$ has the property $\sigma\subset
\overline{I^{+}(y)}$ then the same is true for all the points which
come before $y$ on $\sigma$. Moreover, if a point $y$ has this
property then it has the same chronological future of the points
which come before it on $\sigma$. As a consequence for a point $y$
sharing this property $E^{+}(y)\subset \sigma$, indeed a point $z\in
E^{+}(y)\backslash \sigma$ would be chronologically related with the
points before $y$ (the lightlike geodesics do not join smoothly at
$y$) and hence with $y$, a contradiction. Thus not all the points on
$\sigma$ can share the mentioned property otherwise $T^{+}(y)\subset
\sigma$ while $I^{+}(y)$, $I^{+}(y)\subset T^{+}(y)$, has clearly
points which do not belong to $\sigma$. Thus there is a point $w \in
\sigma$ such that all the points $y\in \sigma$ before it share the
mentioned property while those after it do not. We have to show that
$w$ shares the property. Since for every  $y\in \sigma$ before $w$,
we have $E^{+}(y)\subset \gamma$ we also have $T^{+}(y) \subset
J^{+}(w)$ because the chain of $E^{+}$ related events starting from
$y$ can leave $\sigma$ only after having left the portion of
$\sigma$ before $w$. Thus $I^{+}(y)\subset T^{+}(y)\subset J^{+}(w)$
and hence $y \in \overline{J^{+}(w)}$. As it holds for every $y$
before $w$ and it clearly holds for every points after $w$ we have
$\sigma \subset \overline{I^{+}(w)}$.

\end{proof}

\section{Conclusions}
It has been proved that in a distinguishing spacetime the horismos
relation generates the causal relation, namely $E^{+\infty}=J^{+}$,
and that the result is optimal in the sense that distinction can not
be weakened to future or past distinction.

As previously mentioned this result may be useful in all those
 approaches which try to replace the spacetime  with a
partially ordered set.  This is the route taken for instance in the
Causal Set Theory of quantum gravity
\cite{bombelli87,sorkin91,henson06}. Here an inspiring result is the
already mentioned Malament's theorem \cite{malament77b} which states
that in a distinguishing spacetime the causal relation determines
the conformal structure of spacetime. The conformal factor can be
fixed through the notion of volume on spacetime. If one picks up
points from the spacetime in a random way so that equal volumes have
equal average number of picked up points, and connects them through
the causal relation one obtains a causal set, that is, a discrete
version of the original spacetime. The idea is that one can start
from an abstract notion of causal set, regard it as fundamental, and
study in which limit it approximates the usual spacetime continuum.

If the distinction property suitably extended to the discrete domain
holds then we can have the reasonable hope to get the right
continuum limit as the abstract causal relation imbedded in the
causal set would pass to a continuum causal relation which in turn
determines the light cone structure. It is interesting to note that
at least the property of non-total imprisonment is always satisfied
in a causal set, indeed the local finiteness property
\cite{bombelli87,sorkin91,henson06} implies that every causal curve
entering a finite volume would pass through the finite number of
vertices there contained to finally escape.

%Sorkin therefore introduced the slogan ``Order plus  Number (volume)
%= Geometry'' to summarize the above idea on which Causal Set Theory
%has grown.

The result proved in this work may be useful to build up an abstract
framework in which the horismos rather that the causal relation,
plays the fundamental role. This possibility is already suggested by
causal set theory in which there seems to be a somewhat unexplored
analogy (but see \cite{philpott08}) between the concept of horismos
relation on the continuum side and that of {\em link}
\cite{johnston08} on the causal set (discrete) side. Both of them,
in the respective contexts, generate the causal relation.

\section*{Acknowledgments}
This work has been partially supported by GNFM of INDAM. My interest
on a possible improvement of Kronheimer and Penrose's result arose
while writing the review \cite{minguzzi06c} jointly with M.
S\'anchez. Useful conversations with him on this problem are
acknowledged.

%\bibliography{../../bibliografie/simultaneity,../../bibliografie/libri,../../bibliografie/miei,../../bibliografie/mieiPreprints,../../bibliografie/mieiProceedings}
%\bibliographystyle{plain}

\end{document}